\documentclass[11pt]{article}

\usepackage[margin=1in]{geometry}
\usepackage{amsmath,amssymb,amsthm,mathtools}
\usepackage{algorithm}
\usepackage{algpseudocode}
\usepackage{enumitem}
\usepackage{microtype}
\usepackage{hyperref}
\hypersetup{hidelinks}

\newtheorem{theorem}{Theorem}
\newtheorem{lemma}[theorem]{Lemma}

\theoremstyle{definition}
\newtheorem{definition}[theorem]{Definition}
\newtheorem{remark}[theorem]{Remark}

\newcommand{\OPT}{\mathrm{OPT}}
\newcommand{\ALG}{\mathrm{ALG}}
\newcommand{\cc}{\operatorname{cc}}

\newcommand{\cutG}[1]{c_G(\delta_G(#1))}

\title{An \(O(\log n)\)-Approximation for Three-Terminal\\
Reachability-Preserving Minimum Edge Cut}

\author{Qi Duan, Carnegie Mellon University}
\date{}

\begin{document}

\maketitle

\begin{abstract}
In the three-terminal Reachability-Preserving Minimum Edge Cut problem,
the input is an undirected edge-weighted graph with terminals
\(s_1,s_2,t\). The objective is to delete a minimum-cost set of edges
that separates \(t\) from both \(s_1\) and \(s_2\), while preserving
connectivity between \(s_1\) and \(s_2\).

We give a polynomial-time \(O(\log n)\)-approximation algorithm. The
algorithm uses a probabilistic distribution of cut-dominating
decomposition trees. A direct transfer of a connected tree solution to
the original graph is not valid because a connected tree cluster may
induce a disconnected vertex set in the graph. We overcome this
obstruction by expanding every rooted tree cluster into the connected
components it induces in the original graph. These components form a
node-weighted auxiliary graph. A minimum node-weighted path in this
auxiliary graph produces a connected feasible source side.

The main structural observation is that the total graph-boundary cost
of all connected components of a rooted tree cluster is no greater
than the capacity of the corresponding tree edge. This permits the
auxiliary path to be compared with a tree cut separating an optimal
preserved \(s_1\)-\(s_2\) path from \(t\). Combining this comparison
with the expected \(O(\log n)\) cut distortion of the decomposition
trees proves the approximation guarantee.
\end{abstract}

\section{Introduction}

The three-terminal Reachability-Preserving Minimum Edge Cut problem,
abbreviated RPMEC, combines a separation requirement with a protected
connectivity requirement. Given an undirected graph with terminals
\(s_1,s_2,t\), one must separate \(t\) from both source terminals while
ensuring that the two source terminals remain mutually reachable.

The connectivity requirement distinguishes RPMEC from ordinary
minimum cut. A cut of small capacity may place \(s_1\) and \(s_2\) on
the same side while leaving them in different connected components.
Consequently, methods that preserve only cut capacities do not
automatically preserve feasibility.

Previous work obtained an \(O(\sqrt n)\)-approximation through a
root-linear or modular-surrogate formulation~\cite{rpmecCurrent}. A
natural question is whether a logarithmic approximation is possible.

We answer this question affirmatively. Our algorithm uses the
probabilistic cut-tree decomposition framework of
R\"acke~\cite{racke2008} and its capacity-mapping interpretation
\cite{andersenfeige}. These results provide a polynomial-size
distribution over decomposition trees such that every support tree
dominates all graph cuts, while each fixed graph cut has only
\(O(\log n)\) expected tree-cut capacity.

The central technical difficulty is that a tree cluster need not induce
a connected subgraph of the original graph. Our solution is to replace
each rooted tree cluster by all connected components of its induced
graph. We create one auxiliary node for each such component, weighted
by its graph-boundary cost. Two auxiliary nodes are adjacent when the
corresponding components intersect or are joined by a graph edge.

A shortest auxiliary path from a component containing \(s_1\) to a
component containing \(s_2\) yields a connected union in the original
graph. The total boundary of this union is at most the auxiliary-path
weight.

The key accounting identity is the following. If \(C_e\) is the graph
vertex set below a rooted tree edge \(e\), then
\[
\sum_{K\in\cc(G[C_e])}\cutG{K}
=
\cutG{C_e}.
\]
Since the decomposition-tree edge capacity dominates \(\cutG{C_e}\),
all components generated by \(e\) can be charged to that single tree
edge.

This component-budget property allows us to compare the auxiliary
shortest path with any tree cut separating \(t\) from a fixed
\(s_1\)-\(s_2\) graph path. Choosing such a path inside an optimal
RPMEC source side completes the \(O(\log n)\) analysis.

Our main result is as follows.

\begin{theorem}\label{thm:main-intro}
Weighted undirected three-terminal RPMEC with nonnegative rational edge
costs admits a polynomial-time \(O(\log n)\)-approximation algorithm.
\end{theorem}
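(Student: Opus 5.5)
We will prove Theorem~\ref{thm:main-intro} by following the outline sketched in the introduction. First, we reduce RPMEC to a source-side formulation. A feasible solution is equivalent to a vertex set \(S\) with \(s_1,s_2\in S\), \(t\notin S\), and \(G[S]\) connected, at cost \(\cutG{S}\). Any feasible edge deletion contains \(\delta_G(S)\) for \(S\) the component of \(s_1\) after deletion. Conversely, deleting \(\delta_G(S)\) is feasible. The same holds if we only require a connected set \(U\ni s_1,s_2\) avoiding \(t\): we output \(\delta_G(U)\).

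Next, we invoke the R\"acke distribution~\cite{racke2008,andersenfeige}. This is a polynomial-size distribution \(\{(T_i,\lambda_i)\}\) over rooted decomposition trees on \(V\) in which every tree edge \(e\) has capacity \(\mathrm{cap}(e)\ge\cutG{C_e}\). In addition, every graph edge set \(F\) has expected tree load \(\sum_i\lambda_i\,\mathrm{load}_{T_i}(F)\le O(\log n)\,c_G(F)\). For each tree \(T\), we root it at the leaf \(t\). We form the auxiliary graph \(H_T\) whose nodes are the pairs \((e,K)\) with \(K\in\cc(G[C_e])\) and \(t\notin K\). Each node has weight \(\cutG{K}\), and two nodes are adjacent when the components intersect or share a graph edge. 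We compute a minimum node-weight path in \(H_T\) from a node containing \(s_1\) to a node containing \(s_2\), using Dijkstra on node weights. The union \(U\) of the path's components is connected, contains \(s_1,s_2\), and avoids \(t\). By subadditivity of cuts, \(\cutG{U}\le\) the weight of the path. The algorithm outputs the best \(U\) over all trees in the support.

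For the analysis, fix an optimal source side \(S^*\) with cost \(\OPT=\cutG{S^*}\), and fix a path \(P\subseteq G[S^*]\) from \(s_1\) to \(s_2\). In tree \(T\), consider the minimal set \(E'\) of tree edges whose removal separates \(t\) from all of \(V(P)\). Its tree cost \(\sum_{e\in E'}\mathrm{cap}(e)\) is at most the load on \(T\) of \(\delta_G(S^*)\). This needs care: each tree edge in \(E'\) must be charged to a graph edge of \(\delta_G(S^*)\) routed over it. We argue this by choosing the cut set \(E'\) as the edges whose lower cluster contains vertices of \(S^*\)'s image but not \(t\), taken at the top-most level. Alternatively, we take the tree min-cut separating \(t\) from \(S^*\), which is dominated by the routed load of \(\delta_G(S^*)\) by the standard Andersen--Feige capacity-mapping argument. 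The clusters \(C_e\) for \(e\in E'\) cover \(V(P)\) and exclude \(t\). Take all components \(K\in\cc(G[C_e])\), \(e\in E'\), that meet \(V(P)\). Consecutive vertices of \(P\) lie in components that either coincide or are joined by an edge of \(P\), so these nodes contain an \(s_1\)–\(s_2\) walk in \(H_T\). The weight of that walk is at most \(\sum_{e\in E'}\sum_{K}\cutG{K}=\sum_{e\in E'}\cutG{C_e}\le\sum_{e\in E'}\mathrm{cap}(e)\). Here the middle equality is the component-budget identity: components of an induced subgraph have no edges between them, so the boundaries partition \(\delta_G(C_e)\). Averaging over the distribution gives \(\min_T \ALG_T\le\mathbb E[\ALG_T]\le O(\log n)\OPT\).

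The main obstacle is the step bounding the tree cut separating \(t\) from \(V(P)\) by the expected tree image of the graph cut \(\delta_G(S^*)\). Cut domination only gives the inequality in the wrong direction for arbitrary tree cuts. What we need is that the specific graph cut \(\delta_G(S^*)\) induces, in each tree, a separating tree-edge set whose capacity is at most its routed load. We will derive this from the routing interpretation of~\cite{andersenfeige}. Every tree edge \(e\) has capacity equal to the load of graph edges routed through it. Every tree path from \(t\) to a vertex of \(S^*\) corresponds to a graph route crossing \(\delta_G(S^*)\), so a min tree cut is at most the load of \(\delta_G(S^*)\), whose expectation is \(O(\log n)\OPT\). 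Polynomial running time follows from the polynomial support size and the \(O(n^2)\)-size auxiliary graphs.
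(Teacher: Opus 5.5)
Your proposal is correct and is essentially the paper's proof: the same auxiliary graph $H_T$ on the connected components of the rooted clusters with weights $\cutG{K}$, the same subadditivity bound for the returned union, the same component-budget identity $\sum_{K\in\cc(G[C_e])}\cutG{K}=\cutG{C_e}\le w_e$, and the same walk through the components that meet $V(P)$. The only difference is that you bound the cheapest tree cut separating $t$ from $V(P)$ by charging the tree edges whose images under R\"acke's tree-into-graph embedding cross $\delta_G(S^*)$ to their expected edge loads (keep this routing argument and drop the ``top-most clusters'' variant, which as stated just selects the tree edge at the leaf $t$), whereas the paper cites the resulting per-cut bound $\sum_i p_i\lambda_{T_i}(U^\star)\le\rho\OPT$ together with $\phi_T(P^\star)\le\lambda_T(U^\star)$, and it also reduces zero-cost edges to the strictly positive case via $c_\varepsilon$, a step you omit.
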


\section{Problem Definition and Source-Side Formulation}

Let
\[
G=(V,E)
\]
be an undirected graph with a nonnegative rational edge-cost function
\[
c:E\rightarrow\mathbb{Q}_{\ge 0}.
\]
Let \(s_1,s_2,t\in V\) be distinct terminals.

For \(U\subseteq V\), define
\[
\delta_G(U)
=
\{uv\in E: |\{u,v\}\cap U|=1\},
\]
and let
\[
\cutG{U}
=
\sum_{e\in\delta_G(U)}c(e).
\]

\begin{definition}[Three-terminal RPMEC]
A feasible RPMEC solution is an edge set \(F\subseteq E\) such that,
in \(G-F\),
\begin{enumerate}[label=(\roman*)]
    \item \(s_1\) and \(s_2\) are connected; and
    \item \(t\) is disconnected from both \(s_1\) and \(s_2\).
\end{enumerate}
The objective is to minimize
\[
c(F)=\sum_{e\in F}c(e).
\]
\end{definition}

The problem has a useful equivalent formulation in terms of a
connected source side.

\begin{lemma}[Connected source-side formulation]
\label{lem:source-side}
The optimum RPMEC value equals
\[
\OPT
=
\min
\left\{
    \cutG{U}:
    s_1,s_2\in U,\;
    t\notin U,\;
    G[U]\text{ is connected}
\right\}.
\tag{1}\label{eq:source-side}
\]
\end{lemma}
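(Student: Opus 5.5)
We prove the equality by establishing the two inequalities between the RPMEC optimum and the right-hand side of \eqref{eq:source-side}. Nonnegativity of \(c\) carries both directions, and connectivity of \(G[U]\) is exactly what ties a source side to feasibility.

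\emph{Direction ``\(\le\)'': from a source side to a cut.} Take any \(U\) with \(s_1,s_2\in U\), \(t\notin U\), and \(G[U]\) connected, and set \(F=\delta_G(U)\), so that \(c(F)=\cutG{U}\). We check feasibility of \(F\).
\begin{enumerate}[label=(\alph*)]
\item Every edge of \(G[U]\) has both endpoints in \(U\), so none lies in \(F\). Hence \(G[U]\) is a subgraph of \(G-F\). Since \(G[U]\) is connected and contains \(s_1\) and \(s_2\), these two terminals are connected in \(G-F\).
\item Every path in \(G-F\) from a vertex of \(U\) to a vertex outside \(U\) would have to use an edge of \(\delta_G(U)=F\). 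So no such path exists. Because \(t\notin U\), the vertex \(t\) is disconnected from \(s_1\) and \(s_2\).
\end{enumerate}
Thus \(F\) is feasible, and the RPMEC optimum is at most the right-hand side of \eqref{eq:source-side}.

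\emph{Direction ``\(\ge\)'': from a cut to a source side.} Let \(F\) be any feasible solution, and let \(U\) be the vertex set of the connected component of \(G-F\) containing \(s_1\).
\begin{enumerate}[label=(\alph*)]
\item By condition (i), \(s_2\in U\). By condition (ii), \(t\notin U\).
\item \(G[U]\) is connected. The component of \(G-F\) is a connected spanning subgraph of \(G[U]\), and adding the remaining edges of \(G[U]\) cannot disconnect it.
\item \(\delta_G(U)\subseteq F\). If an edge \(uv\) with \(u\in U\), \(v\notin U\) were not in \(F\), then \(v\) would lie in the same component of \(G-F\) as \(u\), contradicting the choice of \(U\).
\end{enumerate}
Since \(c\ge 0\), we get \(\cutG{U}\le c(F)\). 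So the right-hand side of \eqref{eq:source-side} is at most the cost of every feasible \(F\), and in particular at most the RPMEC optimum.

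The only point needing care is this last step: \(F\) may contain edges other than \(\delta_G(U)\), for instance edges separating \(t\) from other vertices outside \(U\). This is exactly where nonnegativity of \(c\) is used. Both minima exist because the underlying sets are finite, and if either family is empty then both are.
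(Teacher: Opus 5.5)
Your proof is correct and follows the paper's argument: for any admissible \(U\) the set \(\delta_G(U)\) is feasible, and conversely the component \(U\) of \(G-F\) containing \(s_1,s_2\) satisfies \(\delta_G(U)\subseteq F\), so \(\cutG{U}\le c(F)\) by nonnegativity. Your explicit check that \(G[U]\) is connected, because \(G[U]\) contains the connected component of \(G-F\) on \(U\) as a spanning subgraph, makes explicit a step the paper leaves implicit.
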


\begin{proof}
Let \(U\subseteq V\) satisfy the conditions in
\eqref{eq:source-side}. Deleting \(\delta_G(U)\) separates every
vertex of \(U\) from \(V\setminus U\), and therefore separates
\(t\notin U\) from \(s_1,s_2\in U\). Since \(G[U]\) is connected,
\(s_1\) and \(s_2\) remain connected. Hence \(\delta_G(U)\) is a
feasible RPMEC solution.

Conversely, let \(F\) be any feasible RPMEC solution. Let \(U\) be the
connected component of \(G-F\) containing \(s_1\) and \(s_2\).
Feasibility implies \(t\notin U\). Every graph edge in
\(\delta_G(U)\) must belong to \(F\); otherwise it would join \(U\)
to another component of \(G-F\). Therefore,
\[
\cutG{U}\le c(F).
\]
Taking the minimum over feasible edge sets and feasible source sides
proves the equality.
\end{proof}

\begin{remark}[Feasibility test]
\label{rem:feasibility-test}
The instance is feasible if and only if \(s_1\) and \(s_2\) are
connected in \(G-t\). This condition can be checked before running the
approximation algorithm.
\end{remark}

\section{Preprocessing Zero-Cost Edges}
\label{sec:zero-cost}

The decomposition-tree theorem is most commonly stated for strictly
positive edge capacities. We reduce the nonnegative-cost case to that
setting without changing the asymptotic approximation guarantee.

Let
\[
E^+
=
\{e\in E:c(e)>0\},
\]
and let
\[
G^+=(V,E^+).
\]
Denote the connected component of a vertex \(v\) in \(G^+\) by
\(D(v)\).

Construct a quotient graph \(Z\) whose vertices are the connected
components of \(G^+\). Two distinct quotient vertices are adjacent
whenever a zero-cost edge of \(G\) joins their corresponding
components.

\begin{lemma}[Detection of zero-cost solutions]
\label{lem:zero-optimum}
There is a feasible RPMEC solution of cost zero if and only if
\[
D(s_1)
\quad\text{and}\quad
D(s_2)
\]
are connected in
\[
Z-D(t).
\]
Moreover, when such a solution exists, one can construct it in
polynomial time.
\end{lemma}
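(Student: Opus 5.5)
The plan is to reduce everything to the connected source-side formulation of Lemma~\ref{lem:source-side}. A zero-cost feasible solution exists if and only if there is a set \(U\subseteq V\) with \(s_1,s_2\in U\), \(t\notin U\), \(G[U]\) connected, and \(\cutG{U}=0\). The key observation is the following. Since every edge of \(E^+\) has strictly positive cost, \(\cutG{U}=0\) holds exactly when \(\delta_G(U)\cap E^+=\emptyset\). Equivalently, \(U\) is a union of connected components of \(G^+\). A positive edge leaving \(U\) would join a vertex of \(U\) to a vertex of the same \(G^+\)-component outside \(U\), and conversely a union of \(G^+\)-components has no positive boundary edge. So zero-cost source sides correspond to sets of quotient vertices of \(Z\), and it remains to translate the other three conditions.

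For the forward direction, I take such a \(U\) and let \(\mathcal{D}_U\) be the set of \(G^+\)-components it contains. Then \(D(s_1),D(s_2)\in\mathcal{D}_U\), and \(D(t)\notin\mathcal{D}_U\) because \(t\notin U\). Connectivity of \(G[U]\) gives a graph path inside \(U\) from \(s_1\) to \(s_2\). I walk along it and record the sequence of components visited. Consecutive distinct components are joined by a path edge, and that edge must have zero cost, since positive edges stay inside one component. Hence this sequence is a walk in \(Z\) from \(D(s_1)\) to \(D(s_2)\) that avoids \(D(t)\). In particular, if \(D(t)\) coincides with \(D(s_1)\) or \(D(s_2)\), no such \(U\) exists, which is consistent with the condition, since that endpoint is deleted in \(Z-D(t)\).

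For the converse and the construction, I find a path \(D(s_1)=D_0,D_1,\dots,D_k=D(s_2)\) in \(Z-D(t)\) by breadth-first search, and set \(U=D_0\cup\dots\cup D_k\). Then:
\begin{itemize}
\item each \(D_i\) is connected in \(G^+\subseteq G\);
\item consecutive \(D_i\) are joined by a zero-cost edge, so \(G[U]\) is connected;
\item \(t\notin U\), because the \(G^+\)-components are disjoint and \(D(t)\) is not on the path;
\item \(\cutG{U}=0\), because \(U\) is a union of \(G^+\)-components.
\end{itemize}
By Lemma~\ref{lem:source-side}, \(\delta_G(U)\) is then a feasible solution of cost zero. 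Computing the components of \(G^+\), building \(Z\), and running the search all take polynomial time.

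I expect no serious obstacle. The only delicate point is the claim that zero boundary forces \(U\) to be a union of \(G^+\)-components, which relies on strict positivity of the costs on \(E^+\). A secondary subtlety is the degenerate case where \(t\) shares a \(G^+\)-component with a source, and I will state explicitly how that case is handled.
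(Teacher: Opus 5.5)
Your proposal is correct and follows essentially the same route as the paper: reduce to a zero-boundary connected source side, note that such a set is exactly a union of \(G^{+}\)-components, and translate connectivity of \(G[U]\) into connectivity of \(D(s_1)\) and \(D(s_2)\) in \(Z-D(t)\) (and conversely, take the union of the components along a quotient path). Your explicit walk along an \(s_1\)-\(s_2\) path in \(G[U]\) and your remark on the degenerate case \(D(t)\in\{D(s_1),D(s_2)\}\) only spell out details that the paper leaves implicit.
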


\begin{proof}
Suppose first that \(U\) is a feasible connected source side with
\[
\cutG{U}=0.
\]
No positive-cost edge crosses \(\delta_G(U)\). Therefore, for every
connected component \(D\) of \(G^+\), either \(D\subseteq U\) or
\(D\cap U=\varnothing\). Thus \(U\) is a union of quotient vertices.

Because \(G[U]\) is connected, the quotient vertices represented in
\(U\) induce a connected subgraph of \(Z\). This subgraph contains
\(D(s_1)\) and \(D(s_2)\) and excludes \(D(t)\). Hence \(D(s_1)\) and
\(D(s_2)\) are connected in \(Z-D(t)\).

Conversely, suppose that \(D(s_1)\) and \(D(s_2)\) are connected in
\(Z-D(t)\). Let
\[
D_0,D_1,\ldots,D_r
\]
be a quotient path between them that avoids \(D(t)\), and define
\[
U=\bigcup_{j=0}^{r}D_j.
\]
Every \(D_j\) is connected through positive-cost edges, and
consecutive components on the quotient path are joined by a zero-cost
edge. Therefore, \(G[U]\) is connected.

The set \(U\) contains \(s_1,s_2\) and excludes \(t\). Since \(U\) is
a union of complete connected components of \(G^+\), no positive-cost
edge leaves \(U\). Every edge in \(\delta_G(U)\) consequently has cost
zero, proving \(\cutG{U}=0\).
\end{proof}

Assume henceforth that no zero-cost feasible solution exists. Let
\[
\Delta
=
\min\{c(e):c(e)>0\},
\]
and let \(m=|E|\). Since every feasible solution now contains at least
one positive-cost edge,
\[
\OPT\ge\Delta.
\tag{2}\label{eq:opt-at-least-delta}
\]

Define a strictly positive perturbed cost function
\[
c_\varepsilon(e)
=
\begin{cases}
c(e), & c(e)>0,\\[1mm]
\varepsilon, & c(e)=0,
\end{cases}
\qquad
\varepsilon=\frac{\Delta}{m}.
\tag{3}\label{eq:perturbed-cost}
\]

For every edge set \(F\subseteq E\),
\[
c(F)\le c_\varepsilon(F)
\le c(F)+m\varepsilon
=c(F)+\Delta.
\tag{4}\label{eq:perturbation-bound}
\]

In particular, if \(\OPT_\varepsilon\) is the optimum under
\(c_\varepsilon\), then
\[
\OPT_\varepsilon
\le
\OPT+\Delta
\le
2\OPT.
\tag{5}\label{eq:perturbed-opt}
\]

Therefore, a \(\rho\)-approximation under \(c_\varepsilon\) has
original cost at most
\[
\rho\OPT_\varepsilon
\le
2\rho\OPT.
\tag{6}\label{eq:perturbation-factor}
\]
Thus it suffices to prove the approximation theorem for strictly
positive capacities. The factor of two is absorbed into the
\(O(\log n)\) bound.

\section{Cut-Dominating Decomposition Trees}

We now state the decomposition-tree theorem used as a black box.

A capacitated decomposition tree
\[
T=(V_T,E_T,w)
\]
is a tree whose leaves are in bijection with \(V\). Internal tree
vertices need not correspond to graph vertices.

For \(A\subseteq V\), define
\[
\lambda_T(A)
=
\min
\left\{
    \sum_{e\in F}w_e:
    F\subseteq E_T
    \text{ separates all leaves of }A
    \text{ from all leaves of }V\setminus A
\right\}.
\tag{7}\label{eq:tree-cut}
\]
Thus \(\lambda_T(A)\) is the minimum tree-cut capacity realizing the
leaf partition \(A,V\setminus A\).

\begin{definition}[Cut-dominating decomposition tree]
A decomposition tree \(T\) dominates \(G\) if
\[
\cutG{A}\le\lambda_T(A)
\qquad
\text{for every }A\subseteq V.
\tag{8}\label{eq:domination}
\]
\end{definition}

\begin{theorem}[Probabilistic decomposition-tree theorem
\cite{racke2008,andersenfeige}]
\label{thm:racke}
Let \(G=(V,E,c)\) be an undirected graph with strictly positive
capacities. In polynomial time, one can construct a distribution with
polynomial support
\[
\mathcal D
=
\{(p_i,T_i):i=1,\ldots,q\},
\]
where \(p_i\ge0\), \(\sum_i p_i=1\), and each
\[
T_i=(V_i^T,E_i^T,w_i)
\]
is a capacitated decomposition tree satisfying the following
properties.

\begin{enumerate}[label=(\roman*)]
    \item The leaves of \(T_i\) are in bijection with \(V\).

    \item Every tree edge \(e\in E_i^T\) induces a bipartition
    \[
    C_{i,e},\;V\setminus C_{i,e}
    \]
    of the graph vertices through the corresponding partition of the
    tree leaves.

    \item The capacity assigned to \(e\) is
    \[
    w_i(e)=c_G(\delta_G(C_{i,e})).
    \tag{9}\label{eq:tree-edge-induced-capacity}
    \]

    \item Every tree \(T_i\) dominates every graph cut:
    \[
    c_G(\delta_G(A))
    \le
    \lambda_{T_i}(A)
    \qquad
    \text{for every }A\subseteq V.
    \tag{10}\label{eq:tree-domination}
    \]

    \item For every fixed \(A\subseteq V\),
    \[
    \sum_{i=1}^{q}p_i\lambda_{T_i}(A)
    \le
    \rho\,c_G(\delta_G(A)),
    \tag{11}\label{eq:expected-tree-distortion}
    \]
    where \(\rho=O(\log n)\).
\end{enumerate}
\end{theorem}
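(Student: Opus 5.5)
We would not reprove Theorem~\ref{thm:racke} from scratch. We would take R\"acke's construction~\cite{racke2008}, in the form given by Andersen and Feige~\cite{andersenfeige}, as the core ingredient and derive properties (i)--(v) from it. What we use from that construction is a polynomial-size distribution over hierarchical decomposition trees \(T_i\) with the following data.
\begin{itemize}
\item Each tree node \(x\) corresponds to a cluster of graph vertices, namely the leaves below \(x\). Properties (i) and (ii) then hold by construction.
\item Each tree edge \(e\) below a cluster \(C_{i,e}\) gets capacity \(w_i(e)=c_G(\delta_G(C_{i,e}))\), which is exactly property (iii).
\item Each node \(x\) carries a representative leaf \(r(x)\), with \(r(v)=v\) at the leaves.
\item Each tree edge \(e=xy\) is embedded as a routing of \(w_i(e)\) units of flow between \(r(x)\) and \(r(y)\) in \(G\).
\item The expected congestion is logarithmic: for every graph edge \(f\),
\[
\sum_i p_i\sum_{e:\,f\text{ carries flow of }e} (\text{flow of }e\text{ on }f)\le \rho\,c(f),\qquad \rho=O(\log n).
\]
\end{itemize}
Producing this distribution is the genuinely hard part and the main obstacle. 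R\"acke obtains it by a multiplicative-weights / zero-sum-game argument, repeatedly calling an FRT-style tree embedding for the current edge lengths. We would invoke that result rather than redo it, checking only that the support size and running time are polynomial and that positive capacities are all it needs. This is why Section~\ref{sec:zero-cost} first makes every capacity positive.

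Domination (iv) follows from (iii) alone. Fix \(A\subseteq V\) and any \(F\subseteq E_i^T\) that separates the leaves of \(A\) from those of \(V\setminus A\). Take a graph edge \(uv\in\delta_G(A)\). Since \(F\) separates \(u\) from \(v\), the tree path between them contains some \(e\in F\). Then exactly one of \(u,v\) lies in \(C_{i,e}\), so \(uv\in\delta_G(C_{i,e})\). Hence
\[
\delta_G(A)\subseteq\bigcup_{e\in F}\delta_G(C_{i,e}),
\qquad
c_G(\delta_G(A))\le\sum_{e\in F}w_i(e).
\]
Minimizing over \(F\) gives \(c_G(\delta_G(A))\le\lambda_{T_i}(A)\).

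For the distortion bound (v), fix \(A\) and let \(F_i\) be the set of tree edges \(xy\) whose representatives \(r(x)\) and \(r(y)\) lie on different sides of \((A,V\setminus A)\).
\begin{itemize}
\item \(F_i\) is a valid separating set. Walking along a tree path from a leaf in \(A\) to a leaf outside \(A\), the side of the representative must switch at some edge, and that edge lies in \(F_i\). So \(\lambda_{T_i}(A)\le\sum_{e\in F_i}w_i(e)\).
\item Each \(e\in F_i\) routes \(w_i(e)\) units between the two sides of \(A\), so all of that flow crosses \(\delta_G(A)\). Thus \(\sum_{e\in F_i}w_i(e)\) is at most the total flow that the embedding of \(T_i\) places on edges of \(\delta_G(A)\).
\end{itemize}
Averaging with weights \(p_i\) and applying the congestion bound edge by edge on \(\delta_G(A)\) gives
\[
\sum_i p_i\lambda_{T_i}(A)\le\rho\,c_G(\delta_G(A)),
\]
which is (v).
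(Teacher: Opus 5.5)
Your proposal is correct. The paper does not prove Theorem~\ref{thm:racke}. It imports all of (i)--(v) from R\"acke and Andersen--Feige, and the only part it argues itself is that (iv) follows from (iii). That is Lemma~\ref{lem:tree-dominates}, and it is exactly your edge-covering argument.

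You go further on (v). Rather than taking the cut-distortion inequality as given, you start from the routing form in which R\"acke's result is usually stated: representatives with \(r(v)=v\) at the leaves, each tree edge \(e\) routed in \(G\) as \(w_i(e)\) units of flow, and expected relative load \(O(\log n)\) on every graph edge. You then convert this through the tree cut \(F_i\) of edges whose endpoint representatives lie on opposite sides of \(A\). Both halves of that conversion are sound:
\begin{itemize}
\item The side-switching walk shows \(F_i\) separates the leaves of \(A\) from those of \(V\setminus A\) using only \(r(v)=v\) at the leaves. So it does not depend on where internal representatives are placed, not even on whether they lie in their own clusters.
\item Each \(e\in F_i\) sends all \(w_i(e)\) units across \(\delta_G(A)\). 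Hence \(\lambda_{T_i}(A)\) is at most the embedding's total load on \(\delta_G(A)\). Averaging over \(i\) and applying the congestion bound edge by edge gives (v).
\end{itemize}

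The paper's citation-only treatment is shorter, but it implicitly assumes the cited works already state the guarantee in cut form with capacities exactly \(c_G(\delta_G(C_{i,e}))\). Your derivation makes that interface explicit and shows that (iii) plus the standard congestion guarantee is all the paper needs.
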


\begin{remark}[Required form of the black box]
\label{rem:decomposition-structure}
The proof uses more than the existence of an arbitrary dominating
tree. It uses decomposition trees whose leaves correspond to the
vertices of \(G\), so that every rooted tree edge defines a concrete
graph vertex cluster. This cluster structure is needed to form the
induced graphs \(G[C_e]\) and their connected components.
\end{remark}

\begin{remark}[Nonnegative costs]
\label{rem:nonnegative-costs}
The strictly positive hypothesis of Theorem~\ref{thm:racke} causes no loss of
generality. By Lemma~\ref{lem:zero-optimum}, an optimum-zero instance can be
recognized and solved exactly. Otherwise, the perturbation in
\eqref{eq:perturbed-cost} converts the instance to one with strictly
positive capacities and loses at most an additional factor of two.
\end{remark}

The next elementary observation explains why decomposition trees with
edge capacities as in \eqref{eq:tree-edge-induced-capacity} dominate
graph cuts.

\begin{lemma}
\label{lem:tree-dominates}
Suppose every tree edge \(e\) has capacity
\[
w_e=\cutG{C_e},
\]
where \(C_e,V\setminus C_e\) is its induced leaf partition. Then
\[
\cutG{A}\le\lambda_T(A)
\]
for every \(A\subseteq V\).
\end{lemma}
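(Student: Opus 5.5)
Fix \(A\subseteq V\) and a tree edge set \(F\subseteq E_T\) that separates the leaves of \(A\) from the leaves of \(V\setminus A\) and achieves \(\lambda_T(A)=\sum_{e\in F}w_e\). The plan is to prove that \(\delta_G(A)\subseteq\bigcup_{e\in F}\delta_G(C_e)\). Once this inclusion holds, subadditivity of the cut function finishes the proof: since all costs are nonnegative,
\[
\cutG{A}\le\sum_{e\in F}\cutG{C_e}=\sum_{e\in F}w_e=\lambda_T(A).
\]
Edges of \(\delta_G(A)\) that lie in several sets \(\delta_G(C_e)\) are simply overcounted, which only helps.

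To prove the inclusion, take a graph edge \(uv\in\delta_G(A)\) with \(u\in A\) and \(v\notin A\). The leaves of \(u\) and \(v\) must lie in different components of \(T-F\), since \(F\) separates \(A\) from \(V\setminus A\). Hence the unique tree path \(P_{uv}\) between these two leaves contains some edge \(e\in F\).

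Removing \(e\) from \(T\) splits the tree into two subtrees, and these give the leaf bipartition \(C_e,V\setminus C_e\). Because \(e\) lies on \(P_{uv}\), the leaves \(u\) and \(v\) fall into different subtrees. So exactly one of \(u,v\) belongs to \(C_e\), which means \(uv\in\delta_G(C_e)\). This proves the inclusion.

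There is no real obstacle here. The only point needing care is the fact that a tree edge lies on the path between two leaves exactly when it separates them in the induced bipartition. This holds because tree paths are unique.
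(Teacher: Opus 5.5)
Your proof is correct and follows the paper's argument: for each edge of \(\delta_G(A)\), the tree path between its endpoints' leaves contains some \(e\in F\), so the edge lies in \(\delta_G(C_e)\), and summing with nonnegative costs gives the bound. The only cosmetic difference is that you fix a minimizing \(F\) up front (it exists because there are finitely many candidate sets), whereas the paper bounds an arbitrary separating \(F\) and then minimizes.
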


\begin{proof}
Let \(F\subseteq E_T\) be any tree-edge set separating the leaves in
\(A\) from the leaves in \(V\setminus A\).

Consider a graph edge \(uv\in\delta_G(A)\). The leaves corresponding
to \(u\) and \(v\) lie on opposite sides of the required leaf
partition. Their unique tree path must therefore contain at least one
edge \(e\in F\). Equivalently, \(uv\in\delta_G(C_e)\) for at least one
\(e\in F\). Hence every graph edge in \(\delta_G(A)\) is counted at
least once in
\[
\sum_{e\in F}\cutG{C_e}
=
\sum_{e\in F}w_e.
\]
Therefore,
\[
\cutG{A}\le\sum_{e\in F}w_e.
\]
Minimizing over feasible tree-edge sets \(F\) proves the claim.
\end{proof}

\begin{remark}
The expected guarantee in \eqref{eq:expected-tree-distortion} is needed
only for the single fixed cut \(U^\star\) corresponding to an optimal
RPMEC solution. We do not require one tree to approximate all graph
cuts from above within \(O(\log n)\) simultaneously.
\end{remark}

\section{Rooted Clusters and Component Budgets}

Fix one decomposition tree
\[
T=(V_T,E_T,w)
\]
from the support of \(\mathcal D\). Root \(T\) at the leaf
corresponding to \(t\).

For each tree edge \(e\), let
\[
C_e\subseteq V\setminus\{t\}
\]
denote the graph vertices whose leaves lie in the component of \(T-e\)
not containing the root leaf \(t\).

Let
\[
\mathcal K_e
=
\cc(G[C_e])
\]
be the collection of vertex sets of the connected components of the
induced graph \(G[C_e]\).

For every \(K\in\mathcal K_e\), define its component cost by
\[
b(K)=\cutG{K}.
\tag{12}\label{eq:component-cost}
\]

The following identity is the principal structural fact used by the
algorithm.

\begin{lemma}[Component-budget lemma]
\label{lem:component-budget}
For every tree edge \(e\),
\[
\sum_{K\in\mathcal K_e}\cutG{K}
=
\cutG{C_e}
\le
w_e.
\tag{13}\label{eq:component-budget}
\]
\end{lemma}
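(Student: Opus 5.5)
The plan is to prove the equality by a double-counting argument over graph edges, and then read off the inequality directly from the capacity rule \eqref{eq:tree-edge-induced-capacity}.

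For the inequality, Theorem~\ref{thm:racke}(iii) states that \(w_e=\cutG{C_e}\), where \(C_e\) is one side of the bipartition induced by \(e\). Because \(\delta_G\) is symmetric under complementation, it does not matter which side is called \(C_e\). So \(\cutG{C_e}\le w_e\) holds, in fact with equality. If one only wants to use domination (iv), the same conclusion follows: \(\lambda_T(C_e)\le w_e\), since removing the single edge \(e\) already separates the leaves of \(C_e\) from the rest.

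For the equality, I will show that \(\delta_G(C_e)\) is the disjoint union of the sets \(\delta_G(K)\) for \(K\in\mathcal K_e\). Summing costs then gives the identity. Let \(K_1,\dots,K_r\) be the components of \(G[C_e]\); they partition \(C_e\). The argument has three steps.

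First, consider an edge \(uv\in\delta_G(K_j)\) with \(u\in K_j\). Then \(v\notin K_j\). If \(v\) were in \(C_e\), then \(uv\) would be an edge of \(G[C_e]\), and \(u\) and \(v\) would lie in the same component, a contradiction. Hence \(v\notin C_e\) and \(uv\in\delta_G(C_e)\).

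Second, every \(uv\in\delta_G(C_e)\) with \(u\in C_e\) lies in \(\delta_G(K_j)\) for the unique \(j\) with \(u\in K_j\), since \(v\notin C_e\supseteq K_j\).

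Third, this \(j\) is unique: the edge has exactly one endpoint in \(C_e\), and the \(K_j\) are disjoint, so no edge belongs to two of the sets \(\delta_G(K_j)\).

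Together these give \(\delta_G(C_e)=\bigsqcup_j\delta_G(K_j)\), and summing \(c\) over both sides yields \(\sum_{K\in\mathcal K_e}\cutG{K}=\cutG{C_e}\).

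There is no real obstacle. The only point needing care is the first step: no edge can run between two distinct components of \(G[C_e]\), because such an edge would lie inside \(G[C_e]\) and would have merged them. This is exactly the fact that makes the boundaries of the components add up without double counting.
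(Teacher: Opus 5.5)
Your proof is correct and matches the paper's. The equality is the same double-counting argument: no edge of $G$ joins two distinct components of $G[C_e]$, so $\delta_G(C_e)$ is the disjoint union of the sets $\delta_G(K)$. Your secondary route to the inequality, namely $\lambda_T(C_e)\le w_e$ via the single tree edge $e$ combined with domination, is exactly the paper's.

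Your primary route through property (iii) and $\delta_G(A)=\delta_G(V\setminus A)$ is also valid, and it even gives equality. The paper's domination argument has the mild advantage of working for any dominating tree, not only for trees whose capacities are exactly the induced cuts.
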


\begin{proof}
Distinct sets in \(\mathcal K_e\) are connected components of the
induced graph \(G[C_e]\). Thus no graph edge has endpoints in two
different members of \(\mathcal K_e\).

It follows that every edge leaving a component
\(K\in\mathcal K_e\) also leaves \(C_e\). Conversely, every edge
leaving \(C_e\) leaves exactly one connected component of \(G[C_e]\).
Therefore,
\[
\sum_{K\in\mathcal K_e}\cutG{K}
=
\cutG{C_e}.
\]

Since cutting the single tree edge \(e\) separates the leaves of
\(C_e\) from the leaves of \(V\setminus C_e\),
\[
\lambda_T(C_e)\le w_e.
\]
Tree domination gives
\[
\cutG{C_e}\le\lambda_T(C_e).
\]
Combining the two inequalities proves \(\cutG{C_e}\le w_e\).
\end{proof}

\section{The Auxiliary Component Graph}

For the fixed rooted decomposition tree \(T\), construct a
node-weighted auxiliary graph \(H_T\).

For every tree edge \(e\in E_T\) and every component
\(K\in\mathcal K_e\), create an auxiliary node \(x_{e,K}\). Assign it
node weight
\[
b(x_{e,K})=\cutG{K}.
\tag{14}\label{eq:aux-weight}
\]

Two component nodes \(x_{e,K}\) and \(x_{f,L}\) are adjacent whenever
at least one of the following holds:
\[
K\cap L\neq\varnothing,
\tag{15}\label{eq:intersection-adjacency}
\]
or
\[
E_G(K,L)\neq\varnothing,
\tag{16}\label{eq:edge-adjacency}
\]
where \(E_G(K,L)\) denotes the set of graph edges with one endpoint in
\(K\) and the other in \(L\).

Add two zero-weight auxiliary terminals \(a\) and \(b\). Connect \(a\)
to every component node whose component contains \(s_1\), and connect
\(b\) to every component node whose component contains \(s_2\).

Let \(Q_T\) be a minimum node-weighted \(a\)-\(b\) path in \(H_T\).
Define
\[
W_T
=
\bigcup_{x_{e,K}\in V(Q_T)}K.
\tag{17}\label{eq:WT}
\]
The candidate returned for tree \(T\) is
\[
F_T=\delta_G(W_T).
\tag{18}\label{eq:FT}
\]

A node-weighted shortest path can be computed by splitting every
component node into an entrance node and an exit node connected by an
edge whose weight equals the component-node weight.

\begin{remark}[Existence of an auxiliary \(a\)-\(b\) path]
\label{rem:auxiliary-path-exists}
If \(s_1\) and \(s_2\) are connected in \(G-t\), then \(a\) and \(b\)
are connected in \(H_T\).

To see this, let
\[
P=(v_0=s_1,v_1,\ldots,v_\ell=s_2)
\]
be a path in \(G-t\). For each \(v_j\), consider the tree edge
incident to the leaf corresponding to \(v_j\). Because \(v_j\neq t\),
the rooted cluster below that edge is the singleton \(\{v_j\}\), so
\(H_T\) contains a component node representing \(\{v_j\}\).

For each \(j\), the graph edge \(v_jv_{j+1}\) makes the corresponding
component nodes adjacent in \(H_T\). The first node is adjacent to
\(a\), and the final node is adjacent to \(b\). Hence these nodes form
an \(a\)-\(b\) walk in \(H_T\), which contains a simple
\(a\)-\(b\) path.
\end{remark}

\begin{algorithm}[t]
\caption{Component expansion for one decomposition tree}
\label{alg:single-tree}
\begin{algorithmic}[1]
\Require Graph \(G=(V,E,c)\), terminals \(s_1,s_2,t\), decomposition
tree \(T\)
\Ensure A feasible RPMEC cut \(F_T\)

\State Root \(T\) at the leaf corresponding to \(t\)
\State Initialize an empty node-weighted graph \(H_T\)

\For{each tree edge \(e\in E_T\)}
    \State Let \(C_e\) be the leaf set below \(e\), away from \(t\)
    \State Compute \(\mathcal K_e=\cc(G[C_e])\)
    \For{each \(K\in\mathcal K_e\)}
        \State Add auxiliary node \(x_{e,K}\) with weight \(\cutG{K}\)
    \EndFor
\EndFor

\For{each pair \(x_{e,K},x_{f,L}\)}
    \If{\(K\cap L\neq\varnothing\) or \(E_G(K,L)\neq\varnothing\)}
        \State Add edge \(x_{e,K}x_{f,L}\) to \(H_T\)
    \EndIf
\EndFor

\State Add zero-weight auxiliary nodes \(a,b\)
\State Join \(a\) to all \(x_{e,K}\) with \(s_1\in K\)
\State Join \(b\) to all \(x_{e,K}\) with \(s_2\in K\)

\State Compute a minimum node-weighted \(a\)-\(b\) path \(Q_T\)
\State \(W_T\gets\bigcup_{x_{e,K}\in V(Q_T)}K\)
\State \Return \(F_T=\delta_G(W_T)\)
\end{algorithmic}
\end{algorithm}

\section{Feasibility and Cost of a Tree Candidate}

We first establish that the candidate generated from every tree is
feasible.

\begin{lemma}[Feasibility]
\label{lem:feasibility}
The set \(W_T\) satisfies
\[
s_1,s_2\in W_T,
\qquad
t\notin W_T,
\qquad
G[W_T]\text{ is connected}.
\]
Consequently, \(F_T=\delta_G(W_T)\) is a feasible RPMEC solution.
\end{lemma}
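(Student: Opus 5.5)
We verify the three properties of \(W_T\) directly from the construction of \(H_T\) and of the clusters \(C_e\), and then invoke Lemma~\ref{lem:source-side}. Throughout we assume the instance is feasible (Remark~\ref{rem:feasibility-test}). Remark~\ref{rem:auxiliary-path-exists} then guarantees that \(Q_T\) exists. Write
\[
Q_T=(a,x_1,\ldots,x_r,b),\qquad x_j=x_{e_j,K_j}.
\]
Since \(a\) and \(b\) are not adjacent, we have \(r\ge1\), and \(W_T=K_1\cup\cdots\cup K_r\).

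\textbf{Terminals.} The node \(x_1\) is adjacent to \(a\), and by construction \(a\) is joined only to component nodes whose component contains \(s_1\). Hence \(s_1\in K_1\subseteq W_T\). Symmetrically, \(s_2\in K_r\subseteq W_T\). For \(t\), every \(K_j\) is contained in some cluster \(C_{e_j}\). Because \(T\) is rooted at the leaf of \(t\), each \(C_e\) lies in \(V\setminus\{t\}\): the side of \(T-e\) not containing the root leaf excludes \(t\). Thus \(t\notin W_T\).

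\textbf{Connectivity.} This is the only step needing care. First, each \(G[K_j]\) is connected, since \(K_j\) is a connected component of \(G[C_{e_j}]\). Next, consecutive nodes \(x_j,x_{j+1}\) are adjacent in \(H_T\), so either \eqref{eq:intersection-adjacency} or \eqref{eq:edge-adjacency} holds.
\begin{itemize}
\item If \(K_j\cap K_{j+1}\neq\varnothing\), then \(G[K_j\cup K_{j+1}]\) is connected through a shared vertex.
\item If some graph edge \(uv\) has \(u\in K_j\) and \(v\in K_{j+1}\), then both endpoints lie in \(W_T\). The edge therefore belongs to \(G[W_T]\) and joins the two connected pieces.
\end{itemize}
By induction on \(j\), the graph \(G[K_1\cup\cdots\cup K_j]\) is connected for each \(j\). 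The point to note is that induced subgraphs of \(W_T\) contain every graph edge between members of the union. This is exactly why edge-adjacency in \(H_T\) transfers to connectivity in \(G[W_T]\), even though the clusters \(C_e\) themselves may be disconnected.

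\textbf{Conclusion.} With \(s_1,s_2\in W_T\), \(t\notin W_T\) and \(G[W_T]\) connected, the first direction of the proof of Lemma~\ref{lem:source-side} shows that \(\delta_G(W_T)\) is a feasible RPMEC solution. Deleting it separates \(W_T\) from \(V\setminus W_T\ni t\), while \(G[W_T]\) keeps \(s_1\) and \(s_2\) connected. I expect no real obstacle. The only subtle points are confirming that \(Q_T\) exists under the feasibility assumption and using that every \(C_e\) avoids \(t\) because of the choice of root.
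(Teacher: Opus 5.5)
Your proof is correct and follows the paper's argument essentially step for step. You obtain the terminals from the first and last component nodes adjacent to \(a\) and \(b\), exclude \(t\) because every rooted cluster avoids the root leaf, prove connectivity by induction along \(Q_T\) using intersection or edge adjacency, and conclude via Lemma~\ref{lem:source-side}; your explicit appeal to Remark~\ref{rem:auxiliary-path-exists} for the existence of \(Q_T\) (and the observation \(r\ge 1\)) makes explicit a point the paper's proof leaves implicit.
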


\begin{proof}
The first component node of \(Q_T\) after \(a\) corresponds to a
component containing \(s_1\), while the final component node before
\(b\) corresponds to a component containing \(s_2\). Therefore,
\[
s_1,s_2\in W_T.
\]

Every component used in \(H_T\) is contained in a rooted cluster
\(C_e\subseteq V\setminus\{t\}\). Hence \(t\notin W_T\).

Every component \(K\) represented by a node of \(H_T\) is connected
in \(G\). Consecutive component nodes on \(Q_T\) correspond either to
intersecting vertex sets or to vertex sets joined by a graph edge. The
union of two such connected sets is connected.

Inductively, the union of all component sets represented on \(Q_T\) is
connected. Thus \(G[W_T]\) is connected. The claim follows from
Lemma~\ref{lem:source-side}.
\end{proof}

The auxiliary path weight upper-bounds the graph boundary of the
returned union.

\begin{lemma}[Union-bound lemma]
\label{lem:union-bound}
Let
\[
Q_T=(a,x_{e_1,K_1},\ldots,x_{e_r,K_r},b).
\]
Then
\[
\cutG{W_T}
\le
\sum_{j=1}^{r}\cutG{K_j}
=
b(Q_T).
\tag{19}\label{eq:union-bound}
\]
\end{lemma}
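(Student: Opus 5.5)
The inequality is the standard subadditivity of cut functions under union. The equality is just the definition of node weights, since \(a\) and \(b\) have weight zero. My plan is to prove the inequality by an edge-by-edge charging argument.

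\textbf{Step 1 (equality).} By \eqref{eq:aux-weight}, each component node \(x_{e_j,K_j}\) has weight \(\cutG{K_j}\), and the auxiliary terminals \(a,b\) have weight zero. So the node weight of \(Q_T\) is
\[
b(Q_T)=\sum_{j=1}^{r}\cutG{K_j}.
\]
Distinct nodes on the simple path \(Q_T\) may represent the same vertex set \(K\) coming from different tree edges. This only makes the right-hand side larger, so it causes no difficulty; we keep the sum as a sum over path positions.

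\textbf{Step 2 (charging boundary edges).} Fix an edge \(uv\in\delta_G(W_T)\), with \(u\in W_T\) and \(v\notin W_T\). By \eqref{eq:WT}, there is an index \(j\) with \(u\in K_j\). Since \(K_j\subseteq W_T\), we have \(v\notin K_j\), so \(uv\in\delta_G(K_j)\). Hence every edge of \(\delta_G(W_T)\) lies in at least one of the sets \(\delta_G(K_1),\ldots,\delta_G(K_r)\), that is,
\[
\delta_G(W_T)\subseteq\bigcup_{j=1}^{r}\delta_G(K_j).
\]

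\textbf{Step 3 (summing).} Because the costs are nonnegative, a union bound over the covering in Step 2 gives
\[
\cutG{W_T}\le\sum_{j=1}^{r}\cutG{K_j}.
\]
Here edges that lie in several boundaries \(\delta_G(K_j)\) are simply overcounted, which is harmless.

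No step is a real obstacle. The only points needing care are that edges internal to \(W_T\) are never charged to \(\delta_G(W_T)\), and that nonnegativity of \(c\) justifies dropping the overcounting. Both follow directly from the definitions.
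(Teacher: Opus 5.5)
Your proposal is correct and matches the paper's proof: every edge of \(\delta_G(W_T)\) has its inner endpoint in some \(K_j\subseteq W_T\) and its outer endpoint outside every \(K_j\), so it lies in some \(\delta_G(K_j)\). Nonnegativity of costs then gives the inequality, and the equality is just the definition of the node weights, with \(a\) and \(b\) weighted zero. Your remark that distinct path nodes may carry the same set \(K\) is a harmless extra observation that the paper leaves implicit.
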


\begin{proof}
Consider any graph edge in
\[
\delta_G\left(\bigcup_{j=1}^{r}K_j\right).
\]
One of its endpoints belongs to at least one set \(K_j\), while its
other endpoint belongs to none of the sets \(K_1,\ldots,K_r\).
Therefore, the edge belongs to \(\delta_G(K_j)\) for at least one
\(j\).

Thus every edge contributing to the left-hand side of
\eqref{eq:union-bound} is counted at least once on the right-hand side.
Since all costs are nonnegative,
\[
\cutG{W_T}
\le
\sum_{j=1}^{r}\cutG{K_j}.
\]
The final equality follows from the definition of the auxiliary node
weights.
\end{proof}

\section{Comparison with a Preserved Graph Path}

Let \(P\) be an \(s_1\)-\(s_2\) path in \(G-t\). Define the rooted
tree-separation value
\[
\phi_T(P)
=
\min
\left\{
    \sum_{e\in F}w_e:
    F\subseteq E_T,\;
    F\text{ separates the root leaf }t
    \text{ from every leaf in }V(P)
\right\}.
\tag{20}\label{eq:phi}
\]

The next lemma is the main comparison result.

\begin{lemma}[Path-comparison lemma]
\label{lem:path-comparison}
For every \(s_1\)-\(s_2\) path \(P\subseteq G-t\),
\[
\cutG{W_T}
\le
\phi_T(P).
\tag{21}\label{eq:path-comparison}
\]
\end{lemma}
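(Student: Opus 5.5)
Take an optimal tree-edge set \(F\) attaining \(\phi_T(P)\), build from it an \(a\)-\(b\) path in \(H_T\) whose node weight is at most \(\sum_{e\in F}w_e\), and conclude with Lemma~\ref{lem:union-bound}. Since \(Q_T\) is a minimum node-weighted \(a\)-\(b\) path,
\[
\cutG{W_T}\le b(Q_T)\le b(\text{constructed path})\le \sum_{e\in F}w_e=\phi_T(P).
\]

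First I normalize \(F\). With \(T\) rooted at \(t\), the requirement that \(F\) separate the root from every leaf of \(V(P)\) means that every root-to-leaf path toward a vertex \(v\in V(P)\) contains some edge of \(F\). Replacing \(F\) by its minimal elements (edges of \(F\) with no other \(F\)-edge on the path to the root) keeps it feasible and does not increase its cost, since \(w\ge 0\). For each \(v\in V(P)\) let \(e(v)\in F\) be the edge on the root path of \(v\). Then \(v\in C_{e(v)}\), so \(v\) lies in a unique component \(K(v)\in\mathcal K_{e(v)}\). Let \(\mathcal S=\{x_{e(v),K(v)}:v\in V(P)\}\). By Lemma~\ref{lem:component-budget},
\[
\sum_{x_{e,K}\in\mathcal S}\cutG{K}\le\sum_{e\in F}\sum_{K\in\mathcal K_e}\cutG{K}\le\sum_{e\in F}w_e .
\]
This is where the component-budget lemma is used: several components of the same tree edge may appear in \(\mathcal S\), but together they are still charged to that one edge.

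Next I check connectivity in \(H_T\). For consecutive vertices \(v_j,v_{j+1}\) of \(P\), either \(K(v_j)=K(v_{j+1})\), giving the same node, or the two nodes are distinct. In the second case the graph edge \(v_jv_{j+1}\) has one endpoint in each component, so \(E_G(K(v_j),K(v_{j+1}))\neq\varnothing\) and the nodes are adjacent by \eqref{eq:edge-adjacency}. Moreover, \(a\) is adjacent to \(x_{e(s_1),K(s_1)}\) and \(b\) to \(x_{e(s_2),K(s_2)}\). Following \(P\) therefore gives an \(a\)-\(b\) walk in \(H_T\) that uses only nodes of \(\mathcal S\). Shortcutting repeated nodes yields a simple path whose node set lies in \(\mathcal S\). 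Because node weights are nonnegative, its weight is at most the total weight of \(\mathcal S\), which is at most \(\sum_{e\in F}w_e\) by the display above. Minimality of \(Q_T\) then closes the chain of inequalities.

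The main obstacle is the charging step, namely that each node on the constructed path must be paid for by a distinct budget. Nodes are indexed by pairs \((e,K)\), so distinct nodes of \(\mathcal S\) are distinct components of some \(\mathcal K_e\) with \(e\in F\), and each is counted at most once in \(\sum_{e\in F}\sum_{K\in\mathcal K_e}\). Reducing \(F\) to its minimal edges makes \(e(v)\) well defined. The fact that components of different tree edges may overlap causes no difficulty, because the union bound of Lemma~\ref{lem:union-bound} already absorbs overlaps.
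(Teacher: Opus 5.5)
Your proposal is correct and follows the paper's proof essentially step for step. Your reduction of \(F\) to its topmost edges plays the role of the paper's inclusion-minimal antichain, and the rest matches Steps~3--4: the walk along \(P\) through the components \(K(v)\), shortcutting it to a simple path, charging its nodes via Lemma~\ref{lem:component-budget}, and concluding with Lemma~\ref{lem:union-bound} and the minimality of \(Q_T\).
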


\begin{proof}
Let \(F\subseteq E_T\) be an optimum edge set in \eqref{eq:phi}.
Among all optimum sets, choose \(F\) inclusion minimal. Orient all tree
edges away from the root leaf \(t\).

\paragraph{Step 1: \(F\) is an antichain.}
Suppose that \(e,f\in F\), where \(f\) is a strict descendant of
\(e\). Every root-to-leaf path containing \(f\) also contains \(e\).
Therefore, deleting \(f\) from \(F\) does not restore connectivity
between \(t\) and any vertex of \(P\). This contradicts inclusion
minimality.

Thus no edge in \(F\) is a descendant of another edge in \(F\).
Consequently, the rooted clusters \(\{C_e:e\in F\}\) are pairwise
disjoint.

\paragraph{Step 2: the rooted clusters cover the graph path.}
For every \(v\in V(P)\), the unique tree path from the root leaf \(t\)
to the leaf \(v\) contains at least one edge of \(F\). Thus \(v\)
belongs to \(C_e\) for at least one \(e\in F\). Since the clusters are
pairwise disjoint, each \(v\in V(P)\) belongs to exactly one rooted
cluster \(C_e\) with \(e\in F\).

\paragraph{Step 3: the selected components define an auxiliary walk.}
For each \(e\in F\), define
\[
\mathcal S_e
=
\{K\in\mathcal K_e:K\cap V(P)\neq\varnothing\},
\]
and let
\[
\mathcal S
=
\bigcup_{e\in F}\mathcal S_e.
\]

Write
\[
P=(v_0=s_1,v_1,\ldots,v_\ell=s_2).
\]
By the preceding step, every \(v_j\) lies in a unique rooted cluster
\(C_{e_j}\), where \(e_j\in F\). Let \(K_j\) be the unique connected
component of \(G[C_{e_j}]\) containing \(v_j\).

Consider the sequence
\[
a,\;
x_{e_0,K_0},\;
x_{e_1,K_1},\;
\ldots,\;
x_{e_\ell,K_\ell},\;b.
\tag{22}\label{eq:auxiliary-walk}
\]
The first component contains \(s_1\), so
\(a x_{e_0,K_0}\in E(H_T)\). Similarly, the final component contains
\(s_2\), so \(x_{e_\ell,K_\ell}b\in E(H_T)\).

For every \(j\in\{0,\ldots,\ell-1\}\), either
\(K_j=K_{j+1}\), or the graph edge \(v_jv_{j+1}\) has one endpoint in
\(K_j\) and the other in \(K_{j+1}\). In the latter case,
\[
x_{e_j,K_j}x_{e_{j+1},K_{j+1}}
\in E(H_T).
\]

After suppressing consecutive repetitions, the sequence in
\eqref{eq:auxiliary-walk} is therefore an \(a\)-\(b\) walk in \(H_T\).
Every walk in an undirected graph contains a simple path between its
endpoints. Let \(Q'\) be a simple \(a\)-\(b\) path obtained from this
walk by deleting closed subwalks.

\paragraph{Step 4: charge the auxiliary path to the tree cut.}
Since \(Q_T\) is a minimum node-weighted \(a\)-\(b\) path,
\[
b(Q_T)\le b(Q').
\]
Because all auxiliary node weights are nonnegative and every node of
\(Q'\) corresponds to a component in some \(\mathcal S_e\),
\[
b(Q')
\le
\sum_{e\in F}\sum_{K\in\mathcal S_e}\cutG{K}.
\]
Since \(\mathcal S_e\subseteq\mathcal K_e\),
\[
b(Q_T)
\le
\sum_{e\in F}\sum_{K\in\mathcal K_e}\cutG{K}.
\]
By Lemma~\ref{lem:component-budget},
\[
\sum_{K\in\mathcal K_e}\cutG{K}\le w_e.
\]
Consequently,
\[
b(Q_T)
\le
\sum_{e\in F}w_e
=
\phi_T(P).
\]
Finally, Lemma~\ref{lem:union-bound} gives
\[
\cutG{W_T}
\le b(Q_T)
\le\phi_T(P).
\]
\end{proof}

\section{Approximation Guarantee}

We now prove the main theorem.

\begin{theorem}
\label{thm:main}
There is a polynomial-time algorithm for weighted undirected
three-terminal RPMEC with nonnegative rational edge costs that returns
a solution of cost at most \(O(\log n)\OPT\).
\end{theorem}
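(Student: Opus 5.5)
The algorithm is the natural one. First test feasibility via Remark~\ref{rem:feasibility-test}. Then test for a zero-cost solution via Lemma~\ref{lem:zero-optimum}; if one exists, return it. Otherwise pass to the perturbed costs \(c_\varepsilon\) of \eqref{eq:perturbed-cost}, which are strictly positive. Build the distribution \(\mathcal D=\{(p_i,T_i)\}\) of Theorem~\ref{thm:racke} for \((G,c_\varepsilon)\), run Algorithm~\ref{alg:single-tree} on every support tree \(T_i\), and return the cheapest candidate \(F_{T_i}\) measured under \(c_\varepsilon\). By Remark~\ref{rem:auxiliary-path-exists}, each \(Q_{T_i}\) exists. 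By Lemma~\ref{lem:feasibility}, each \(F_{T_i}\) is feasible.

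For polynomial running time, note the following. The support size \(q\) is polynomial. Each tree has \(O(n)\) edges, since it has \(n\) leaves and we may assume no degree-two internal vertices, or simply that the tree has polynomial size as output by a polynomial-time construction. Each cluster has at most \(n\) components, so \(H_T\) has polynomially many nodes. Adjacency tests and node-weighted shortest path (via node splitting) are polynomial.

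For the cost bound, work throughout under \(c_\varepsilon\) and write \(\OPT_\varepsilon\) for its optimum. By Lemma~\ref{lem:source-side}, fix an optimal connected source side \(U^\star\) with \(s_1,s_2\in U^\star\), \(t\notin U^\star\), \(G[U^\star]\) connected, and \(c_\varepsilon(\delta_G(U^\star))=\OPT_\varepsilon\). Since \(G[U^\star]\) is connected, it contains an \(s_1\)-\(s_2\) path \(P\), and \(P\) avoids \(t\).

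The key step compares \(\phi_T(P)\) with \(\lambda_T(U^\star)\). Take any \(F\subseteq E_T\) realizing \(\lambda_T(U^\star)\). It separates the leaves of \(U^\star\) from those of \(V\setminus U^\star\), and \(t\) lies in the latter. Hence every root-to-leaf tree path from \(t\) to a leaf of \(V(P)\subseteq U^\star\) crosses \(F\), so \(F\) is feasible in \eqref{eq:phi}. This gives \(\phi_T(P)\le\lambda_T(U^\star)\). Lemma~\ref{lem:path-comparison} then yields, for every tree,
\[
c_\varepsilon(\delta_G(W_{T_i}))\le \phi_{T_i}(P)\le\lambda_{T_i}(U^\star).
\]

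Averaging with weights \(p_i\) and applying \eqref{eq:expected-tree-distortion} to the single fixed set \(A=U^\star\) gives
\[
\min_i c_\varepsilon(F_{T_i})\le\sum_i p_i\,c_\varepsilon(F_{T_i})\le\rho\,\OPT_\varepsilon.
\]
The returned solution is the minimizer, so it achieves this bound. Finally, \eqref{eq:perturbation-bound} and \eqref{eq:perturbation-factor} convert this to original cost at most \(2\rho\,\OPT=O(\log n)\OPT\).

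The only step needing care is the comparison \(\phi_T(P)\le\lambda_T(U^\star)\). It relies on the rooting at \(t\) and on \(t\notin U^\star\). Everything else is assembly of earlier lemmas.
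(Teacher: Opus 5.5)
Your proposal is correct and follows the paper's proof essentially step for step. You fix an optimal connected source side \(U^\star\) (for \(c_\varepsilon\)), take an \(s_1\)-\(s_2\) path inside it, and use the rooting at \(t\notin U^\star\) to get \(\phi_{T_i}(P)\le\lambda_{T_i}(U^\star)\). You then apply Lemma~\ref{lem:path-comparison}, average over the distribution, and pay the factor two from the perturbation. The only cosmetic difference is that you select the best tree by \(c_\varepsilon\)-cost, whereas Algorithm~\ref{alg:full} compares original costs \(c(F_i)\); both yield \(c(F)\le 2\rho\OPT\) because \(c\le c_\varepsilon\).
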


\begin{proof}
We first assume that all edge costs are strictly positive. Let
\(U^\star\) be an optimum connected source side in
\eqref{eq:source-side}. Thus
\[
s_1,s_2\in U^\star,
\qquad
t\notin U^\star,
\qquad
G[U^\star]\text{ is connected},
\]
and
\[
\cutG{U^\star}=\OPT.
\]

Since \(G[U^\star]\) is connected, it contains an
\(s_1\)-\(s_2\) path
\[
P^\star\subseteq G[U^\star].
\]
Because \(t\notin U^\star\), this path lies in \(G-t\).

Fix a decomposition tree \(T_i\) in the support of \(\mathcal D\). A
tree-edge set separating all leaves of \(U^\star\) from all leaves of
\(V\setminus U^\star\) separates, in particular, the root leaf
\(t\in V\setminus U^\star\) from every leaf in
\(V(P^\star)\subseteq U^\star\). Therefore,
\[
\phi_{T_i}(P^\star)
\le
\lambda_{T_i}(U^\star).
\tag{23}\label{eq:phi-lambda}
\]

Applying Lemma~\ref{lem:path-comparison} to \(P^\star\) gives
\[
c_G(\delta_G(W_{T_i}))
\le
\phi_{T_i}(P^\star)
\le
\lambda_{T_i}(U^\star).
\tag{24}\label{eq:tree-candidate-bound}
\]

Multiply \eqref{eq:tree-candidate-bound} by \(p_i\) and sum over the
support:
\[
\sum_{i=1}^{q}p_i c_G(\delta_G(W_{T_i}))
\le
\sum_{i=1}^{q}p_i\lambda_{T_i}(U^\star).
\]
By the expected-distortion guarantee in Theorem~\ref{thm:racke},
\[
\sum_{i=1}^{q}p_i\lambda_{T_i}(U^\star)
\le
\rho\,c_G(\delta_G(U^\star))
=
\rho\OPT,
\]
where \(\rho=O(\log n)\).

Therefore, at least one support tree \(T_i\) satisfies
\[
c_G(\delta_G(W_{T_i}))
\le
\rho\OPT.
\]
The algorithm evaluates every tree in the polynomial-size support and
returns the cheapest candidate. Hence its output satisfies
\[
\ALG\le\rho\OPT=O(\log n)\OPT.
\]

For an instance with nonnegative costs,
Lemma~\ref{lem:zero-optimum} first detects and solves exactly the case
\(\OPT=0\). Otherwise, apply the strictly positive analysis to the
perturbed cost function \(c_\varepsilon\). Let \(F\) be the returned
solution. By \eqref{eq:perturbation-factor},
\[
c(F)
\le
c_\varepsilon(F)
\le
\rho\OPT_\varepsilon
\le
2\rho\OPT.
\]
Since \(\rho=O(\log n)\), the approximation ratio remains
\(O(\log n)\).
\end{proof}

The complete algorithm is summarized below.

\begin{algorithm}[t]
\caption{\(O(\log n)\)-approximation for weighted undirected RPMEC}
\label{alg:full}
\begin{algorithmic}[1]
\Require Graph \(G=(V,E,c)\), terminals \(s_1,s_2,t\)
\Ensure A feasible RPMEC cut, or a declaration of infeasibility

\If{\(s_1\) and \(s_2\) are disconnected in \(G-t\)}
    \State Report that the instance is infeasible
\EndIf

\State Construct the positive-edge component quotient \(Z\)
\If{\(D(s_1)\) and \(D(s_2)\) are connected in \(Z-D(t)\)}
    \State Construct the zero-cost source side \(U\) as in
    Lemma~\ref{lem:zero-optimum}
    \State \Return \(\delta_G(U)\)
\EndIf

\State Set
\(\Delta\gets\min\{c(e):c(e)>0\}\) and
\(\varepsilon\gets\Delta/|E|\)
\State Define \(c_\varepsilon\) as in \eqref{eq:perturbed-cost}
\State Construct the polynomial-support decomposition
\(\mathcal D=\{(p_i,T_i):i=1,\ldots,q\}\)
for \((G,c_\varepsilon)\) using Theorem~\ref{thm:racke}

\State \(F_{\mathrm{best}}\gets E\)
\State \(C_{\mathrm{best}}\gets+\infty\)

\For{\(i=1,\ldots,q\)}
    \State Run Algorithm~\ref{alg:single-tree} on \((G,c_\varepsilon,T_i)\)
    to obtain \(F_i\)
    \If{\(c(F_i)<C_{\mathrm{best}}\)}
        \State \(F_{\mathrm{best}}\gets F_i\)
        \State \(C_{\mathrm{best}}\gets c(F_i)\)
    \EndIf
\EndFor

\State \Return \(F_{\mathrm{best}}\)
\end{algorithmic}
\end{algorithm}

\section{Running Time}

Let \(N_T=|E_T|\) for one support tree \(T\).

For every tree edge \(e\), the induced graph \(G[C_e]\) has at most
\(n\) connected components. Therefore, the auxiliary graph contains
at most \(nN_T\) component nodes.

All connected components of \(G[C_e]\) can be computed in polynomial
time. Their boundary costs can be obtained by scanning graph edges. The
auxiliary adjacency relation can be constructed naively in polynomial
time by checking component intersections and graph edges.

A more efficient implementation uses vertex-to-component incidence
lists, a scan of every graph edge to generate component adjacencies,
and duplicate-edge removal.

The minimum node-weighted auxiliary path is computable in polynomial
time by node splitting followed by Dijkstra's algorithm, since all
weights are nonnegative.

The decomposition theorem supplies polynomially many trees, each of
polynomial size. Therefore, evaluating every support tree and returning
the best candidate takes polynomial time.

\section{A Randomized Sampling Variant}

When the decomposition is accessed through a sampling procedure rather
than an explicit support, the analysis gives
\[
\mathbb E_T[\cutG{W_T}]
\le
\rho\OPT.
\]
By Markov's inequality,
\[
\Pr\left[\cutG{W_T}>2\rho\OPT\right]
\le
\frac12.
\]
After sampling
\[
r
=
\left\lceil\log_2\frac{1}{\eta}\right\rceil
\]
independent trees and returning the cheapest candidate, the algorithm
returns a solution of cost at most \(2\rho\OPT\) with probability at
least \(1-\eta\).

The explicit polynomial-support version is preferable for the main
theorem because it gives a deterministic choice of the best candidate
once the decomposition distribution has been constructed.

\section{Why Component Expansion Is Necessary}

A direct tree-based approach would select a connected subtree or a
rooted tree cluster containing \(s_1\) and \(s_2\), and then interpret
its leaves as a graph vertex set. This is not sufficient: the induced
graph on those leaves may be disconnected.

The component-expansion construction resolves this difficulty without
increasing the tree charge. For every rooted cluster \(C_e\),
\[
\sum_{K\in\cc(G[C_e])}\cutG{K}
=
\cutG{C_e}
\le w_e.
\]
Thus splitting a disconnected tree cluster into its graph-connected
components preserves the total available budget.

The auxiliary graph then restores global connectivity by finding a
minimum-cost sequence of components whose union joins \(s_1\) to
\(s_2\). Importantly, this construction does not require any individual
rooted cluster to be connected in \(G\).

This also explains why the approach avoids the fractional-path dilution
that arises in polymatroid or coupled-flow relaxations. The auxiliary
solution commits to actual connected graph components and charges their
full graph boundaries. A shared bottleneck cannot be fractionally
divided among several alternative paths.

\section{Extensions and Limitations}

\paragraph{More than two protected terminals.}
Suppose a set \(R\) of protected terminals must remain connected while
being separated from \(t\). The component nodes can again be used to
construct an auxiliary node-weighted Steiner-tree instance connecting
the components containing terminals in \(R\). This suggests a
polylogarithmic approximation, although the resulting factor and
comparison proof require separate analysis.

\paragraph{Directed graphs.}
The proof relies fundamentally on undirected graph cuts, undirected
connected components, and undirected cut-dominating decomposition
trees. It does not extend directly to directed RPMEC.

\paragraph{Implementation of the tree distribution.}
The approximation theorem uses the standard polynomial-time
construction of a polynomial-support capacity-tree distribution. For
practical implementations, approximate or sampled decomposition trees
may be preferable. Their empirical quality for RPMEC remains an
interesting experimental question.

\section{Conclusion}

We presented an \(O(\log n)\)-approximation algorithm for weighted
undirected three-terminal RPMEC.

The algorithm combines probabilistic cut-dominating decomposition trees
with a connected-component expansion. For every rooted tree cluster,
its connected components in the original graph have total boundary
cost no larger than the capacity of the corresponding tree edge. A
minimum node-weighted path through these components produces a
connected feasible RPMEC source side.

The approximation proof compares this auxiliary path with a tree cut
separating an optimal preserved \(s_1\)-\(s_2\) path from \(t\). The
expected \(O(\log n)\) distortion of the decomposition-tree
distribution then yields the claimed guarantee.

The component-expansion principle may be useful more generally for cut
problems in which the retained side must satisfy an internal
connectivity constraint that is not preserved directly by a tree
embedding.

\bibliographystyle{plain}
\bibliography{references}

\end{document}